\newtheorem{theorem}{Theorem}
\newtheorem{lemma}[theorem]{Lemma}
\newtheorem{corollary}[theorem]{Corollary}
\theoremstyle{definition}
\newtheorem{problem}{Problem}
\g@addto@macro\bfseries{\boldmath}
\newcommand{\floor}[1]{\ensuremath{\left \lfloor #1 \right \rfloor}}
\newcommand{\A}{\ensuremath{\mathcal{A}}}
\newcommand{\R}{\ensuremath{\mathbb{R}}}
\g@addto@macro\bfseries{\boldmath}
\date{}
\title{Bisecting three classes of lines}
\begin{document}
\author[1]{Alexander Pilz\thanks{Supported by a Schr\"odinger fellowship of the Austrian Science Fund (FWF): J-3847-N35.}}
\affil[1]{Institute of Software Technology, Graz University of Technology.\\ \texttt{apilz@ist.tugraz.at}}

\author[2]{Patrick Schnider}
\affil[2]{Department of Computer Science, ETH Z\"urich.\\ \texttt{patrick.schnider@inf.ethz.ch}}

\maketitle

\begin{abstract}
We consider the following problem: %, introduced by Luis Barba:
Let $\mathcal{L}$ be an arrangement of $n$ lines in $\R^3$ colored red, green, and blue.
Does there exist a vertical plane $P$ such that a line on $P$ simultaneously bisects all three classes of points in the cross-section $\mathcal{L} \cap P$?
%Using topological methods, it was recently shown that such a cross-section always exists.
Recently, Schnider [SoCG 2019] used topological methods to prove that such a cross-section always exists.
In this work, we give an alternative proof of this fact, using only methods from discrete geometry.
With this combinatorial proof at hand, we devise an $O(n^2\log^2(n))$ time algorithm to find such a plane and the bisector of the induced cross-section.
We do this by providing a general framework, from which we expect that it can be applied to solve similar problems on cross-sections and kinetic points.
\end{abstract}

\section{Introduction}
The ham-sandwich theorem is a fundamental result in discrete and computational geometry.
It asserts that for a set of points in $\R^d$ with $d$ classes, there exists a hyperplane that simultaneously bisects all classes.
This is tight in the sense that in any dimension, there exist sets with $d+1$ classes that cannot be bisected by a hyperplane.
To push the boundary of this limitation further, Luis Barba introduced the following problem.%
\footnote{Luis Barba, personal communication (2018).}

Let $\mathcal{L}$ be an arrangement of $n$ lines in $\R^3$.
The arrangement $\mathcal{L}$ is in \emph{general position} if no two lines are parallel or intersect and no line is vertical (parallel to the $z$-axis).
Let $P$ be a directed plane, not containing any line of $\mathcal{L}$.
A \emph{cross-section} of $\mathcal{L}$ induced by $P$ is the point set defined by the intersection of $\mathcal{L}$ with $P$.
We will usually consider the cross-section to be in the plane, i.e., we map $P$ to $\R^2$.

\begin{problem}
Given an arrangement $\mathcal{L}$ of $n$ lines in $\R^3$ in general position colored red, green, and blue, does there exist a cross-section determined by a vertical plane for which there exists a line that simultaneously bisects all three classes of points?
\end{problem}

Recently, Schnider~\cite{patrick_socg} was able to answer this question in the affirmative.
Actually, the plane defining such a cross-section can be chosen to contain any vertical line disjoint from~$\mathcal{L}$.
However, this freedom of choice can not be used to show an analogous statement for four colors, Barba showed that three colors is tight if we restrict ourselves to vertical planes.%
\footnote{Luis Barba, personal communication (2018).}
On the other hand, if we allow all possible planes, a stronger statement might be possible, but different arguments than the ones in~\cite{patrick_socg} would be necessary.
We will reprove the following version of the result for three colors:

\begin{theorem}\label{thm:main}
Given an arrangement $\mathcal{L}$ of $n$ lines in $\R^3$ in general position colored red, green, and blue, none of which intersect the $z$-axis.
Then, there exists a plane through the $z$-axis that defines a cross-section for which there exists a line that simultaneously bisects all three classes of points.
\end{theorem}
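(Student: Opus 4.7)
The plan is a rotate-and-track argument in the spirit of a discrete intermediate value theorem. Parameterize vertical planes through the $z$-axis by an angle $\theta \in [0, 2\pi)$, with $P_\theta$ the plane of angle $\theta$ carrying an orientation that depends continuously on $\theta$; note that $P_{\theta+\pi}$ coincides with $P_\theta$ as an unoriented plane but with reversed normal. Let $C_\theta := \mathcal{L} \cap P_\theta$, viewed as a colored point set in $\R^2$ via a continuous identification of $P_\theta$ with the plane. The goal is reduced to finding $\theta$ for which some line in $P_\theta$ bisects all three color classes of $C_\theta$.

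The first step is to track a canonical red--green bisector. For every $\theta$ the planar ham-sandwich theorem supplies a line $\ell_\theta \subset P_\theta$ that simultaneously bisects the red and the green points of $C_\theta$. In general position such a bisector can be taken to pass through a combinatorially distinguished pair of cross-section points, and I would use this to define $\ell_\theta$ canonically and piecewise-continuously in $\theta$. Then set
\[
f(\theta) \;:=\; \bigl|\{\text{blue points above } \ell_\theta\}\bigr| \;-\; \bigl|\{\text{blue points below } \ell_\theta\}\bigr|,
\]
so that finding a triple bisector amounts to finding a zero of $f$. The global observation driving the argument is that reversing the orientation of $P_\theta$ swaps ``above'' and ``below'' and so, at generic angles, $f(\theta+\pi) = -f(\theta)$; were $f$ continuous, we would be done on $[0,\pi]$ by the intermediate value theorem.

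The main obstacle, and the bulk of the combinatorial work, is that $f$ is only piecewise constant: it jumps at a discrete set of critical angles at which the combinatorics changes. I would enumerate these event types and handle them case by case. The typical events are: (i) a line of $\mathcal{L}$ rotating through being parallel to $P_\theta$, which changes $C_\theta$ itself; (ii) two points of $C_\theta$ becoming collinear with the pivot defining $\ell_\theta$, which moves a point across $\ell_\theta$; and (iii) a switch of the combinatorial type of the canonical red--green bisector, at which $\ell_\theta$ jumps among the finitely many ham-sandwich bisectors available at that angle. For events of types (i) and (ii), I expect general position to force each jump of $f$ to be by at most a unit and to be accompanied by an actual point lying on $\ell_\theta$, so that a sign change of $f$ already produces a triple bisector. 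The hardest case is (iii): I would need to argue that at such a branching event, \emph{some} red--green bisector available at that angle also bisects blue, using the fact that the multiple bisectors at the event interpolate between the pre- and post-event values of $f$. Combining these local facts with the global parity $f(\theta+\pi) = -f(\theta)$ forces at least one event to witness a line bisecting all three color classes, yielding Theorem~\ref{thm:main}.
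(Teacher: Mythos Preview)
Your rotate-and-track plan is in the same spirit as the paper's argument, but the place where you wave your hands---case~(iii), where the canonical red--green bisector changes combinatorial type---is exactly the point at which a single-bisector approach breaks down, and the paper takes a genuinely different route to get around it.

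Concretely: in the dual picture the red--green ham-sandwich cuts of $C_\theta$ are the intersections of the red and green median levels, and there is always an odd number of them. As $\theta$ varies, two consecutive such intersections can merge and annihilate (or a new pair can be created). If your canonical $\ell_\theta$ is one of the two that annihilate, you are forced to jump to some other surviving intersection, and the blue imbalance $f$ can change by an arbitrary even integer at that instant. Your suggested fix, that ``the multiple bisectors at the event interpolate between the pre- and post-event values of $f$'', is not true in general: at a fixed $\theta$ the available red--green bisectors are a finite discrete set, and nothing forces their blue imbalances to take all intermediate values. You also omit a further event type: a red--green bisector can escape to infinity (the last segments of the two median levels become parallel) and re-enter on the other side, which again makes a single canonical choice jump.

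The paper sidesteps this by refusing to pick a canonical bisector at all. It dualizes each cross-section, walks the green median level, and records the \emph{entire} ordered pattern of its crossings with the red and the blue median levels as a bi-chromatic sign sequence. Pairwise creation/annihilation of same-color crossings and transfers through infinity become operations on this sequence (insertion/deletion and transfer), and the paper exhibits a $\mathbb{Z}/2$-valued ``trace'' of the sequence that is invariant under these operations but is provably flipped when $\theta$ advances by $\pi$. Hence some event must be of the remaining type, a \emph{switch} in which a red and a blue crossing exchange order along the green level; at that instant all three median levels meet, which is precisely a triple bisector. In short, the paper replaces your real-valued $f$ and IVT by a parity invariant on the full crossing pattern, and it is this invariant that absorbs the troublesome events you could not control.
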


Note that in~\cite{patrick_socg}, the result is phrased differently, but this version follows analogously from the arguments in that paper.
The proof in~\cite{patrick_socg} is existential using topological methods.
In this work, we provide a combinatorial proof (i.e., a proof that uses only methods from discrete geometry and combinatorics) of Theorem~\ref{thm:main} that sheds further light on the structure of the problem.
This allows us to devise an $O(n^{2}\log^{2}(n))$ time algorithm to find such a plane and the bisector of the induced cross-section.

\subparagraph*{Preliminaries.}
Every triple of points of a point set in the plane is oriented either clockwise or counterclockwise, or it is collinear.
Two sets of $n$ points (not all on a single line) with a given bijection between them have the same \emph{order type} if each triple in one set is oriented in the same way as its image in the other set.
The orientation of a point triple $(p,q,r)$ is given by the sign of the determinant
\[
\begin{vmatrix}
x_p & x_q & x_r \\
y_p & y_q & y_r \\
1 & 1 & 1
\end{vmatrix} \enspace .
\]
In particular, it is $0$ if the points are on a line.
Whether there is a bisecting line for a cross-section only depends on its order type (rather than on the actual coordinates of the individual points).
We first observe that the points of a cross-section belonging to three lines become collinear only a constant number of times during the rotation of~$P$.
W.l.o.g., let $\vec{n} = (t, 1, 0)^T$ be the normal vector of $P$ at time~$t$.
Then the coordinates of the intersection point of a line and $P$ is a function in~$t$.
Three points become collinear if the above determinant is zero.
Since we have a rational function of constant degree, the number of times at which three points of a cross-section become collinear is constant.
Further, there is a change in the order type if one of the lines is parallel to $P$, which happens only once.
In total, such a cross-section of $n$ lines thus determines $O(n^3)$ different order types.
(See \cite{lund,crosssections_journal} for similar observations regarding translated planes.)
A brute-force algorithm would therefore test $O(n^3)$ cross-sections for a bisecting line.
(We later show that such a test can be done in $O(n^{{4}/{3}}\log^{1+\epsilon}(n)$ time.)
Our algorithm thus significantly improves over this trivial time bound.

\subparagraph*{Related work.}
There is a vast number of algorithmic results for problems related to ham-sandwich cuts.
Lo, Matou\v{s}ek, and Steiger~\cite{lo} provide a linear-time algorithm for finding a ham-sandwich cut in the plane (following a linear-time randomized algorithm by Lo and Steiger~\cite{lo_steiger}).
They also provide algorithms for higher dimensions.
In contrast to that, optimally counting the number of ham-sandwich cuts of a bi-colored point set is still an open problem.
The current best algorithm relies on walking the median level of the dual line arrangement and keeping track of the levels of the two color classes~\cite{weighted_ham_sandwich}.
This can be done in $O(n^{4/3}\log^{1+\epsilon}(n))$ time~\cite{Chan, constructing_belts} for any $\epsilon > 0$.
(We will also use this as a sub-algorithm.)

Arrangements of lines in $\R^3$ were probably first considered as a data structure by McKenna and O'Rourke~\cite{mckenna}.
Chazelle et al.~\cite{lines_in_space} consider further algorithms and properties related to such arrangements.
Lines in $\R^3$ naturally appear in connection with kinetic data structures.
Problems that involve points moving at individual constant speed can be modeled as problems on line arrangements in $\R^3$ and cross-sections with vertically translated planes (see, e.g.,~\cite{mobile_data} for a historical account).
For such problems, the number of changes in the combinatorial structure of the point set is crucial.
For example, the order type of a cross-section changes only at the time when three points become collinear.
Among other results, Lund et al.~\cite{lund} give a tight bound of $2\binom{n}{3}$ on the number of point triples that become collinear at some point in time.
Aichholzer et al.~\cite{crosssections_journal} show that there appear $O(n^9)$ different order types in that setting.
Combinatorial changes in minimum and maximum spanning trees~\cite{mst}, Voronoi diagrams~\cite{voronoi}, as well as convex hulls and bounding boxes~\cite{extent} of such moving points have also been investigated.

Let us emphasize that our setting is different from the one in which the plane is translated.
When rotating a plane around the $z$-axis, it will at some point be parallel to each line of $\mathcal{L}$, making the corresponding point vanish in the cross-section.
This does not happen for translated planes.
Also, our trajectories are not linear.
Lund et al.~\cite{lund} give an example of a line arrangement in which no point triple changes its orientation;
in our setting, after rotating the plane around the $z$-axis by $180^\circ$, the orientation of all point triples in the cross-section has changed.

\section{A Combinatorial Proof}
\label{sec:proof}

In this section we give an alternative proof of Theorem \ref{thm:main}.
The proof relies on an analysis of the changes in the combinatorial structure of the cross-sections when rotating a plane containing the $z$-axis.
In particular, we will show that during a rotation of the plane by $180^{\circ}$, one of the cross-sections must be such that there exists a line that simultaneously bisects all three classes of points.
Further, the method used to prove this allows us to algorithmically decide, for some moment in the rotation, whether a solution appears before or after this moment.
This sidedness oracle is a crucial element for the algorithm described in Section~\ref{sec:algorithm}.

\subsection{Dual interpretation in cross-sections}

For every directed plane $P$ containing the $z$-axis, denote by $S(P)$ the induced cross-section.
Consider the dual line arrangement $A(P)$, under the standard point-line-duality.
(For example, a point $(x_p, y_p)$ is mapped to the line with equation $y = x_px+y_p$.)
For an arrangement of $m$ lines, the \emph{$k$-level} is the set of points that has at most $k$ lines below it and at most $m-k-1$ lines above it.
If there is no vertical line, the $k$-level is a piece-wise linear $x$-monotone curve.
The $\floor{\frac{m}{2}}$-level is called the \emph{median level}.
%For simplicity, we will in general assume that the number of lines in each color class is odd.
Denote by $r(P)$, $b(P)$ and $g(P)$ the median levels of the red, green and blue lines in $A(P)$, respectively.
A line that bisects all three point sets in a cross-section $S(P)$ corresponds to an intersection point of the three median levels $r(P)$, $b(P)$ and $g(P)$.
Thus, we want to show that there exists some plane $P$ such that in the induced line arrangement $A(P)$ the three median levels $r(P)$, $b(P)$ and $g(P)$ have a common intersection.

To this end, we start with some plane $P_0$ through the $z$-axis and rotate it around that axis, keeping track of the changes in the induced arrangement.
After a rotation of angle~$\pi$, we will arrive again at $P_0$, but oriented the other way;
we will call this oriented plane $P_1$.
Note that $S(P_1)$ is $S(P_0)$ mirrored at the $z$-axis.
Similarly, $A(P_1)$ is $A(P_0)$ mirrored at the $z$-axis.
(The first coordinate of a point corresponds to the slope of the dual line; hence, mirroring along the $z$-axis, which is the $y$-axis of the plane, corresponds to inverting all slopes, which in turn is equivalent to mirroring along
the $z$-axis.)
We will now study what type of changes can occur in the intersection pattern of the middle levels during this rotation.

For simplicity, we assume that each color class of $\mathcal{L}$ contains an odd number of lines.
(This assumption may be dropped at the cost of a more involved exposition.)
Then all three median levels are $x$-monotone curves along lines of the arrangement, except in the degenerate cases where the plane is parallel to one of the lines.
We will look at these degenerate cases later.
For now, suppose no line is parallel to $P$.
We walk along the green median level $g(P)$ (which is now just a curve) and look at the order of intersections with $r(P)$ and $b(P)$ from left to right.
We say that a level $m$ intersects $g(P)$ from  \emph{below} if $m$ is below $g(P)$ before the intersection and above afterwards.
Similarly, $m$ can intersect $g(P)$ from \emph{above}.
Note that if an intersection of $m$ and $g(P)$ is from below, then the next intersection has to be from above.
We will call intersections of $r(P)$ with $g(P)$ \emph{red} intersections and intersections of $b(P)$ with $g(P)$ \emph{blue} intersections.
Further note that when reorienting $P$, the order of intersections reverses, and all intersections from above become intersections from below and vice versa.

When rotating the plane $P$, the dual line arrangement and thus the median levels $r(P)$, $b(P)$, and $g(P)$ change continuously as long as no line becomes parallel to~$P$.
Note that the order of intersection along $g(P)$ can only change when the order type of $S(P)$ (and thus also $A(P)$) changes.
Further, the only changes in the order type of $A(P)$ that affect the order of intersection along $g(P)$ are either the degenerate cases or the ones where a vertex of $A(P)$ passes over a segment of $g(P)$.
Thus, there are the four ways in which the order of intersection along $g(P)$ can change during a rotation of~$P$.
\begin{enumerate}
\item The first is that a red intersection and a blue intersection switch places (i.e., the relative order in which they appear along $g(P)$).
In this case, the two intersections coincide at some point in time, and the point at which they coincide is an intersection that we are looking for;
in particular, this means that three lines of $\A(P)$ meet in a single point.
See Figure \ref{Fig:Switch} for an illustration.
\item The second case happens when (without loss of generality) two red intersections, the first from below, the second from above, coincide.
Then, rotating further, the two intersections vanish.
This type of change does not indicate a solution.
See Figure \ref{Fig:Deletion} for an illustration.
\item The third case happens when the last segments of $g(P)$ and (without loss of generality) $r(P)$ become parallel.
Then the last red intersection vanishes, but it reappears as the first red intersection as the rotation of~$P$ continues.
Furthermore, if this intersection was from above, it is now from below, and vice versa.
Also this type of change will not give us a solution.
See Figure \ref{Fig:Transfer} for an illustration.
(This case does not give a change when considering $P$ as a projective plane;
however, as we prefer to have an order on the intersections along $g(P)$, we consider $P$ to be Euclidean.)
\item Finally, a last type of change occurs in the degenerate case when the plane becomes parallel to one of the lines;
call it $\ell$ and assume without loss of generality that it is red.
See Figure~\ref{Fig:Degenerate} for an illustration.
Let $m$ be the number of red lines.
In this case, the point of intersection of $\ell$ and $P$ moves towards infinity, vanishes when $P$ is parallel to $\ell$, and then enters from the other side.
In the dual, this corresponds to a line $h$ getting steeper and steeper, until it vanishes when $P$ is parallel to $\ell$, and then rotating further, changing the sign of its slope.
(It is useful to consider $h$ to become vertical at this limit instead of vanishing.)
In this situation, the median level $r(P)$ changes quite drastically: assume without loss of generality that $h$ has negative slope before it becomes orthogonal and positive slope afterwards.
Then, shortly before $h$ vanishes, it has the smallest slope of all red lines.
Afterwards, it has the largest slope of all red lines.
Furthermore, the new median level (i.e., $\floor{\frac{m}{2}}$-level) of the red arrangement is the former $(\floor{\frac{m}{2}}-1)$-level to the left of the intersection of $h$ and $r(P)$, and it is the former $(\floor{\frac{m}{2}}+1)$-level to the right of this intersection.
In particular, several pairs of red intersections might have vanished and several red and blue intersections might have switched places.
However, at the moment when $h$ is vertical (or actually vanishes), it plays no role for the median level.
Thus, the red median level is actually the whole area between the former $(\floor{\frac{m}{2}}-1)$-level and the former $\floor{\frac{m}{2}}$-level to the left of $h$, and the area between the former $\floor{\frac{n}{2}}$-level and the former $(\floor{\frac{m}{2}}+1)$-level to the right of $h$.
Thus, if a red and a blue intersection switch places, the blue intersection lies inside the red median level at the moment when $h$ is vertical, and we again have a solution.
(In $S(P)$, this intersection corresponds to a line through a green and a blue point, bisecting an even number of red points.)
The same argument holds if $\ell$ is blue and a similar argument applies if $\ell$ is green.
\end{enumerate}

\begin{figure}
\centering
\includegraphics[scale=0.5]{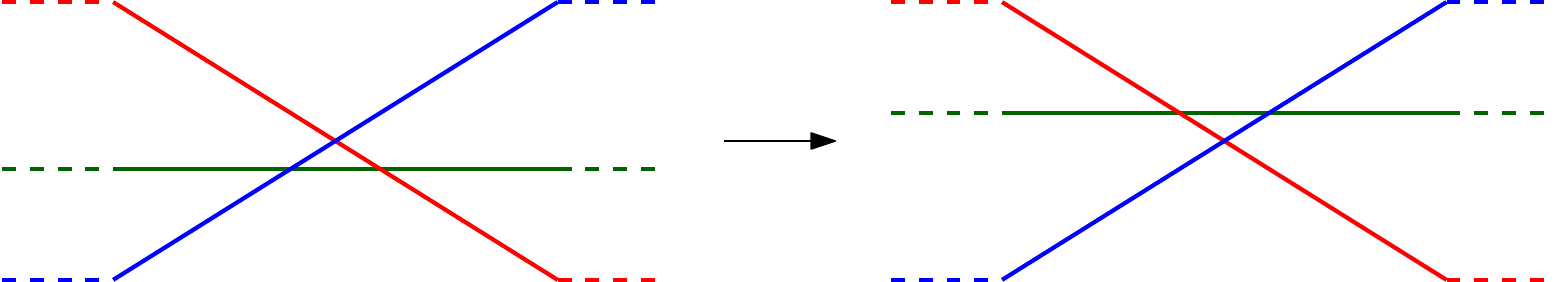}
\caption{A red intersection and a blue intersection switching places.}
\label{Fig:Switch}
\end{figure}

\begin{figure}
\centering
\includegraphics[scale=0.5]{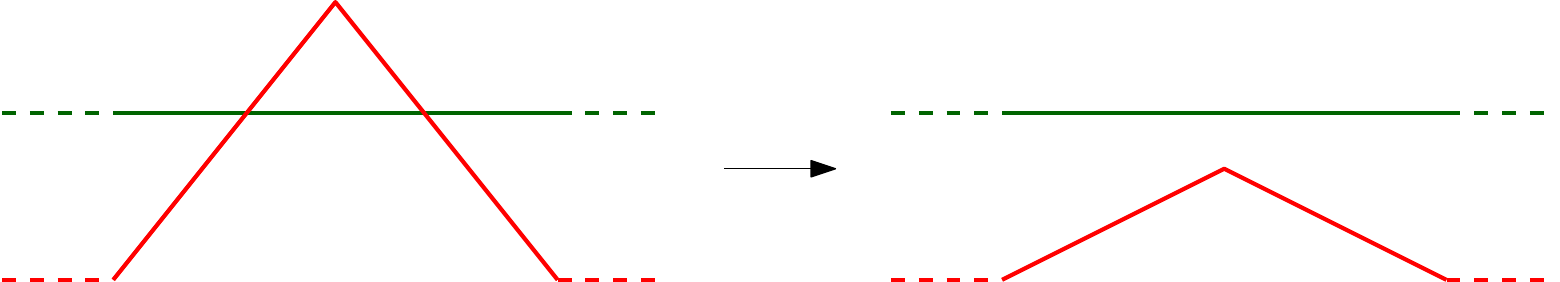}
\caption{Two red intersections vanishing.}
\label{Fig:Deletion}
\end{figure}

\begin{figure}
\centering
\includegraphics[scale=0.5]{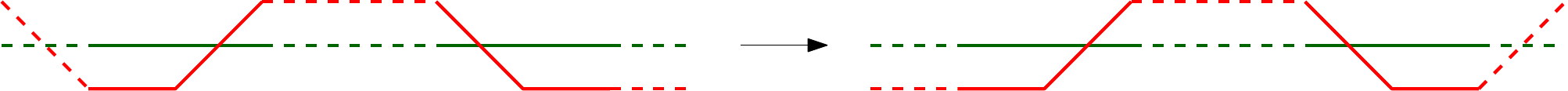}
\caption{The first red intersection vanishing and reappearing as the last red intersection.}
\label{Fig:Transfer}
\end{figure}

\begin{figure}
\centering
\includegraphics[scale=0.5]{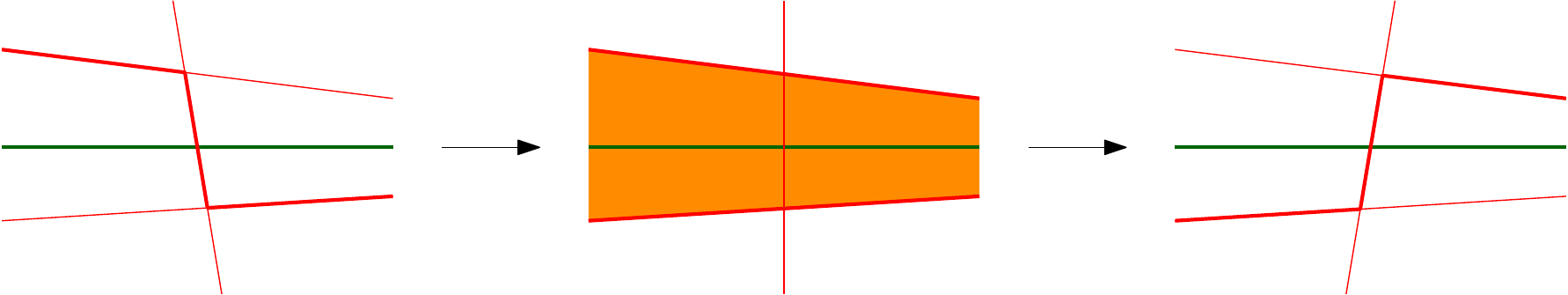}
\caption{A red line becoming vertical.}
\label{Fig:Degenerate}
\end{figure}

In conclusion, we can track the intersections of $r(P)$ and $b(P)$ along $g(P)$ and we find a solution whenever a red intersection and a blue intersection switch places.
Furthermore, we know that after rotating all the way from $P_0$ to $P_1$ the order of intersections has reversed and all intersections that were from above are now from below and vice versa.
Thus, what we want to show is that at some point during the rotation, a red and a blue intersection need to switch places.
We will do this in a purely combinatorial setting in the next section.

\subsection{Bi-chromatic Sign Sequences}

We define a \emph{bi-chromatic sign sequence} $S$ as a finite sequence of signs $s\in\{+,-\}$ that satisfies the following conditions:
\begin{enumerate}
\item[(1)] each symbol $s$ is either red or blue;
\item[(2)] the subsequence of all symbols of one color has odd length and alternates between $+$ and~$-$.
\end{enumerate}

Note that the red and blue intersections along $g(P)$ define a bi-chromatic sign sequence when writing intersections from above as $+$ and intersections from below as $-$.
See Figure \ref{Fig:Sequence} for an illustration.

\begin{figure}
\centering
\includegraphics[scale=1]{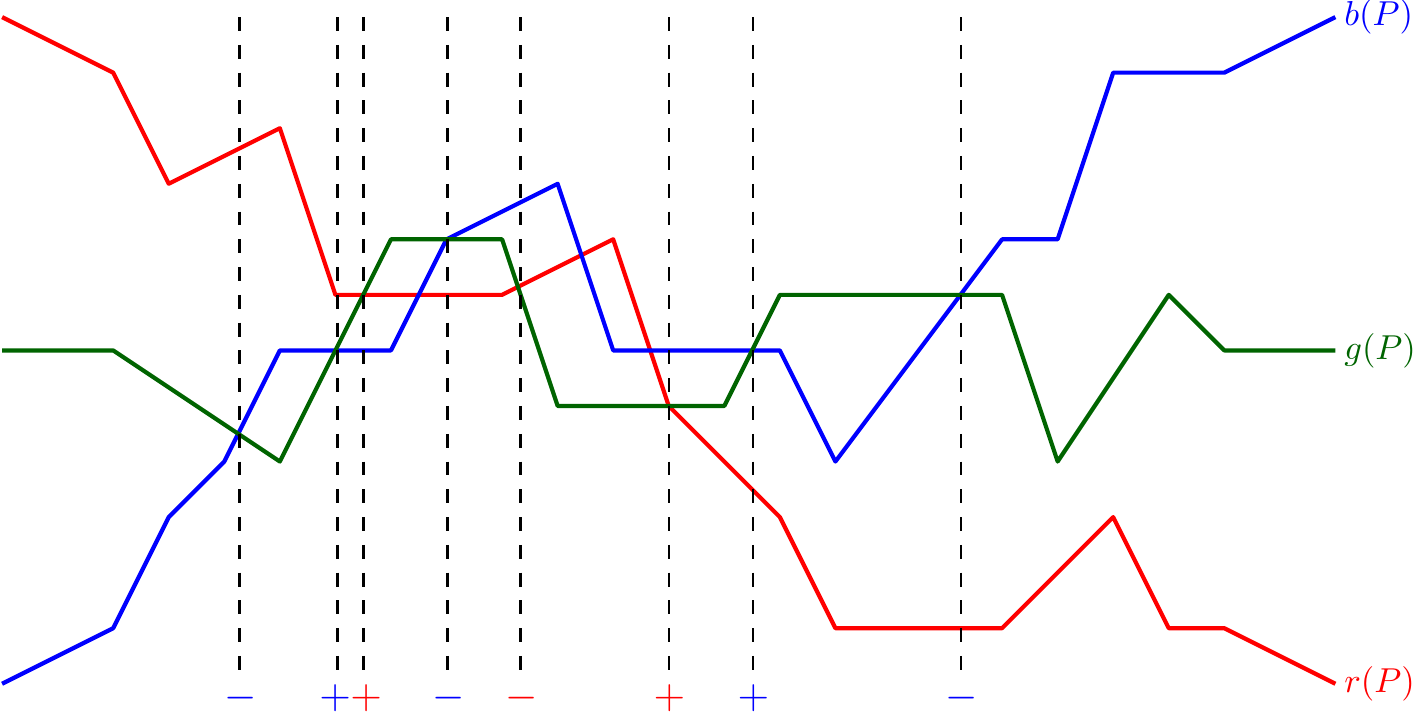}
\caption{Three median levels and their corresponding bi-chromatic sign sequence.}
\label{Fig:Sequence}
\end{figure}

We define the \emph{inverse} of a sign $s$, denoted by $\overline{s}$ as the other sign, but in the same color.
We further define the \emph{reverse} of $S=s_1,\ldots, s_k$, denoted by $\overline{S}$, as $\overline{s_k},\ldots,\overline{s_1}$.
Note that if $S$ is the sequence of red and blue intersections for some $P$, then reorienting $P$ gives rise to $\overline{S}$.

We consider the following operations on bi-chromatic sign sequences:
\begin{enumerate}
\item \textbf{Switch:} Let $s_i$ and $s_{i+1}$ be two consecutive elements of $s_1,\ldots, s_k$ of different color. Then we can switch $s_i$ and $s_{i+1}$ to get a new sequence $s_1,\ldots, s_{i-1},s_{i+1},s_i,\ldots, s_k$.
\item \textbf{Insertion/Deletion:} Let $s_i$ and $s_{i+1}$ be two consecutive elements of $s_1,\ldots, s_k$ of the same color.
(Hence, their signs are different.)
Then we can delete $s_i$ and $s_{i+1}$ to get a new sequence $s_1,\ldots, s_{i-1},s_{i+2},\ldots, s_k$.
Similarly, let either (i) $s_i$ and $s_{j}$ be two elements of the same color that are consecutive in the subsequence of their color, or (ii) $s_j$ be the first element of its color in $S$ and $s_j=\overline{s_i}$ or (iii) $s_i$ be the last element of its color in $S$ and $s_i=\overline{s_j}$.
Then we can insert $\overline{s_i}$ and $\overline{s_j}$ consecutively (w.r.t.\ the entire sequence) somewhere after $s_i$ before $s_j$ to get a new sequence $s_1,\ldots, s_i,\ldots, \overline{s_i}, \overline{s_j}, \ldots, s_j,\ldots, s_k$.
(Case (ii) and (iii) correspond to an insertion at the beginning or the end of $S$, respectively.)
\item \textbf{Transfer:} Let $s_k$ be the last element of the sequence. Then we can transfer $s_k$ to the beginning of the sequence to get a new sequence $\overline{s_k}, s_1,\ldots, s_{k-1}$. Similarly, we can transfer the first element $s_1$ to the end of the sequence to get a new sequence $s_2,\ldots, s_k,\overline{s_1}$.
\end{enumerate}

These operations describe exactly the changes that can happen for a sequence of red and blue intersections: Case 1 translates to a switch, case 2 gives an insertion or a deletion, case 3 corresponds to a transfer.
For case 4, we need the following lemma:

\begin{lemma}\label{case4}
Consider the degenerate case when the plane becomes parallel to one of the lines.
Assume that there is no solution at this point of rotation.
Then the change in the bi-chromatic sign sequence defined by the red and blue intersections along $g(P)$ can be described as a sequence of insertions, deletions and transfers.
\end{lemma}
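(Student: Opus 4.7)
The plan is to analyze locally in $x$ how the red intersections along $g(P)$ change between $t_0-\epsilon$ and $t_0+\epsilon$, where $t_0$ is the event time at which the dual line $h$ becomes vertical, and to conclude under the no-solution hypothesis that the resulting change in the full bi-chromatic sign sequence decomposes into insertions, deletions, and transfers. Write $r_{\text{old}}$ and $r_{\text{new}}$ for the red median levels just before and just after $t_0$; by the description of the ``fat'' red median given in case~4, as $\epsilon\to 0$ the two curves coincide outside the region $R$ bounded by the $(\lfloor m/2\rfloor-1)$- and $\lfloor m/2\rfloor$-levels of the remaining $m-1$ red lines, while inside $R$ they swap which of the two bounding levels they follow on the two sides of $x=x_h$.

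Since $g(P)$ and $b(P)$ change continuously through $t_0$, blue intersections shift only infinitesimally and do not reorder, so every discontinuous change in the sign sequence is carried by red intersections alone. The geometric description above further implies that the ``above vs.\ below $r(P)$'' status of a point of $g(P)$ is the same in the before- and after-sequence wherever $g(P)$ lies outside $R$, and is flipped wherever $g(P)$ lies inside $R$. Moreover, as already argued in case~4 of the paper, any blue intersection of $g(P)$ with $b(P)$ lying inside $R$ at time $t_0$ is itself a common point of the three median levels and hence yields a solution; under our hypothesis no such intersection exists, so every blue intersection lies in an above-$R$ or below-$R$ portion of $g(P)$. Consequently, any red intersections contained in the same maximal inside-$R$ arc of $g(P)$ are consecutive in the full sign sequence.

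I then decompose $g(P)$ along its finitely many maximal inside-$R$ arcs and handle each arc in isolation. Each arc is classified by its entry boundary (top or bottom of $R$), its exit boundary (top or bottom of $R$), and whether its $x$-range contains $x_h$, yielding twelve configurations in total. For each, a direct computation using the status-flip observation produces the list of reds that the arc contributes in the before- and in the after-sequence, together with their $+/-$ signs; in every case the two contributions differ in exactly one of three ways: no change at all, the addition of a consecutive opposite-sign pair of reds, or the removal of such a pair. Composing the resulting local operations over all arcs yields the desired description of the change in the sign sequence; the endpoint variants of the insertion rule cover arcs that reach the extreme left or right of $g(P)$, and a transfer covers the situation in which an intersection near one end of $g(P)$ disappears and re-emerges at the other end, analogously to case~3.

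The hardest part of this plan will be the case analysis for the four arcs that cross $x_h$. On such an arc, $g(P)$ also crosses the nearly-vertical line $h$ at $x\approx x_h$, contributing a red intersection both in the before- and in the after-sequence but with opposite signs. One has to verify that, once combined with the entry- and exit-contributions of the arc, the net change is still a single insertion or deletion of an opposite-sign consecutive pair -- compatible with the forced alternation of the red subsequence -- rather than an operation outside the permitted set. Once this sign bookkeeping is done, the claim follows.
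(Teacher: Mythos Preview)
Your plan is sound and would go through once the bookkeeping is completed, but it takes a considerably more laborious route than the paper's own argument. The paper does not decompose $g(P)$ into its inside-$R$ arcs at all. Instead it observes that, under the no-solution hypothesis, every blue intersection point on $g(P)$ lies on the \emph{same} side of both $r_{\text{old}}$ and $r_{\text{new}}$. Hence, on the stretch of $g(P)$ between any two consecutive blue intersections, the before- and after- red subsequences are both alternating $\pm$-strings whose first sign and last sign are forced to agree (by the common side of $g$ relative to $r_{\text{old}},r_{\text{new}}$ at the two blue endpoints). Two alternating strings with matching first and last sign differ only in length, which is precisely a composition of insertions and deletions. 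The portions of $g(P)$ before the first and after the last blue intersection are then handled by a single dichotomy on whether $r_{\text{old}}$ and $r_{\text{new}}$ start on the same side of $g(P)$: same side reduces to the previous argument, opposite sides produces exactly one transfer. This global parity argument completely bypasses your twelve-configuration case split, and in particular avoids the $x_h$-crossing arcs that you rightly flag as the delicate step. What your finer decomposition buys, by contrast, is explicit local control over \emph{where} each insertion and deletion occurs, which the paper's argument does not provide.

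One omission in your plan: you only treat the case where the line $\ell$ becoming parallel to $P$ is red. The blue case is symmetric, but when $\ell$ is green it is $g(P)$ itself that thickens into a band at the degenerate moment, and both the red and the blue intersections must be re-examined against a moving base curve; the paper notes that a separate (though analogous) argument is required there, and your proposal should at least flag this.
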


\begin{proof}
Assume without loss of generality that the line $\ell$ that the plane becomes parallel to is red.
In the dual of the cross-section, we will, as above, consider the dual line $h$ to become vertical.
Assume again without loss of generality that $h$ has negative slope before it becomes orthogonal and positive slope afterwards.
Let $r_0(P)$ and $r_1(P)$ be the red median levels right before and just after $h$ is vertical.
From the assumption that there is no solution at this point of rotation, we conclude that all intersections of $b(P)$ with $g(P)$ must be either below or above both $r_0(P)$ and $r_1(P)$.
In particular, the only changes that can happen between two consecutive intersections of $b(P)$ with $g(P)$ are insertions and deletions.
It remains to analyze the changes before the first and after the last intersection of $b(P)$ with $g(P)$.
For this, we distinguish two cases.
In the first case, both $r_0(P)$ and $r_1(P)$ start on the same side of $g(P)$, without loss of generality above.
Note that then both $r_0(P)$ and $r_1(P)$ need to end below $g(P)$.
In particular, the same argument as above also holds before the first and after the last intersection of $b(P)$ with $g(P)$.
In the second case, $r_0(P)$ starts above $g(P)$, whereas $r_1(P)$ starts below $g(P)$.
Then, $r_0(P)$ ends below $g(P)$ and $r_1(P)$ ends above $g(P)$.
In particular, without loss of generality, the bi-chromatic sign sequence for $r_0(P)$ starts with a red $+$, while the bi-chromatic sign sequence for $r_1(P)$ ends with a red $-$.
This change can be described by a transfer.
Again, by the above arguments, all other changes correspond to insertions and deletions.
The analogous proof holds when the line $\ell$ that the plane becomes parallel to is blue and similar arguments apply when $\ell$ is green.
\end{proof}

In fact, it can be shown that case 4 can always be described as a sequence of insertions, deletions, transfers and switches, but the above lemma is enough for our purposes.
Let us summarize our findings so far in a lemma:

\begin{lemma}\label{lem:summary}
Let $\mathcal{L}$ be an arrangement of $n$ lines in $\R^3$ colored red, green, and blue, none of which intersect the $z$-axis.
For every oriented plane $P$ containing the $z$-axis, the induced cross-section defines a bi-chromatic sign sequence.
Further, if $P_0$ and $P_1$ are two oriented planes and $S_0$ and $S_1$ are their induced bi-chromatic sign sequences, rotating $P_0$ until it coincides with $P_1$ gives rise to a sequence of bi-chromatic sign sequences from $S_0$ to $S_1$, where two consecutive sign sequences differ by a switch, insertion, deletion or transfer.
Finally, every switch used in this transformation from $S_0$ to $S_1$ corresponds to a cross-section for which there exists a line that simultaneously bisects all three classes of points.
\end{lemma}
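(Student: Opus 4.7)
The plan is to view Lemma~\ref{lem:summary} as a bookkeeping summary of the case analysis carried out earlier in the section, and to verify each of its three assertions by pointing to the appropriate ingredient already in place.

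First I would establish that the sequence of red and blue intersections along $g(P)$ is indeed a bi-chromatic sign sequence in the sense of the definition. Condition~(1) is automatic, since each intersection is colored by whether it lies on $r(P)$ or on $b(P)$. For condition~(2), I would use the assumption that each color class has an odd number of lines: the median level of each monochromatic sub-arrangement is then a single $x$-monotone piecewise-linear curve that, at both $x\to-\infty$ and $x\to+\infty$, asymptotically coincides with the line of median slope in its color class. Hence for generic $P$ the two relevant medians asymptote to lines of different slopes, so their relative vertical order reverses between the two ends; a parity argument yields an odd number of intersections, and consecutive intersections of a fixed color necessarily alternate between ``from above'' and ``from below'' because the two curves are continuous and do not cross between listed intersections.

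Second, I would inspect the rotation of $P$ from $P_0$ to $P_1$ and justify that the bi-chromatic sign sequence can change only at the events catalogued immediately before the lemma. Between events, all involved curves vary continuously, so neither the set of intersections nor their relative order along $g(P)$ can change. The events themselves split into (a)~coincidences of three dual lines in $A(P)$, which realize one of Cases~1--3, and (b)~the degenerate moments when some line of $\mathcal{L}$ becomes parallel to $P$, which is Case~4. In~(a), the three subcases map directly to a switch, to an insertion/deletion, and to a transfer, respectively. In~(b), Lemma~\ref{case4} already shows that, provided no triple bisector exists at that moment, the net effect on the sign sequence is a composition of insertions, deletions, and possibly one transfer. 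Assembling these, every two consecutive sign sequences arising during the rotation differ by exactly one operation of the prescribed form.

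Third, I would argue the final assertion of the lemma, namely that every switch witnesses a triple bisector. A switch arises precisely when a red intersection of $g(P)$ and a blue intersection of $g(P)$ momentarily coincide, i.e.\ the curves $r(P)$, $g(P)$, and $b(P)$ share a common point in the dual arrangement $A(P)$; by the point--line duality recalled at the start of Section~\ref{sec:proof}, this common dual point corresponds to a line in the cross-section $S(P)$ that simultaneously bisects the red, green, and blue point classes. I expect the only genuinely delicate point to be the exhaustiveness claim in step two, that is, ruling out any additional non-generic moment of the rotation that could produce a change of the sign sequence missed by Cases~1--4; this however reduces to standard facts about one-parameter families of line arrangements together with the bookkeeping already done in Lemma~\ref{case4}, so no new ideas are required.
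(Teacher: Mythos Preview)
Your proposal is correct and mirrors the paper's approach exactly: the paper presents Lemma~\ref{lem:summary} explicitly as a summary of the preceding case analysis (Cases~1--4 together with Lemma~\ref{case4}) and gives no separate proof, and your write-up reconstructs precisely that bookkeeping. One small imprecision: you describe the non-degenerate events~(a) as ``coincidences of three dual lines in $A(P)$,'' but Case~3 (transfer) is triggered when the last segments of $g(P)$ and $r(P)$ become \emph{parallel}, i.e.\ an intersection escapes to infinity rather than three lines meeting; this does not affect your argument, since you still map Case~3 correctly to a transfer.
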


We will show that it is not possible to transform a sequence $S$ to $\overline{S}$ using only insertions, deletions and transfers.
To this end, we first define the \emph{reduced sign sequence} of a bi-chromatic sign sequence: look at a subsequence of consecutive elements of the same color.
If this subsequence has even length, remove it.
Iteratively removing such consecutive mono-chromatic subsequences of even length leaves us with a sequence that alternates between the colors.
Further, it is easy to check that this sequence is still a bi-chromatic sign sequence.
Finally, as both the subsequence of red elements as well as the subsequence of blue elements have odd length, the new sequence is not empty and it starts with a different color than it ends with (a fact that will be crucial in proving the following lemmas).
We call the sequence obtained this way the reduced sign sequence of a bi-chromatic sign sequence $S$, and denote it by $\rho(S)$.

At last, we arrive at our final definition.
For a reduced sign sequence $\rho(S)$, we consider two functions.
The first, denoted by $\Sigma(\rho(S))$, is the sum of ``$+$''-symbols minus the number of ``$-$''-symbols in $\rho(S)$.
Note that the value of this function is always in $\{-2,0,2\}$.
The second function, denoted by $\lambda(\rho(S))$, is $2$ whenever the first symbol of $\rho(S)$ is red and $0$ otherwise.
We now define the \emph{trace} $\tau$ of a bi-chromatic sign sequence as follows:
\[
\tau(S):=\frac{\Sigma(\rho(S))+\lambda(\rho(S))}{2}\text{ mod }2 \enspace .
\]

We continue by proving a few crucial lemmas about traces.

\begin{lemma}\label{lem:trace_operations}
The trace of a bi-chromatic sign sequence is invariant under insertions, deletions and transfers.
\end{lemma}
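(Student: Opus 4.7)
The plan is to handle the three operations separately: insertions and deletions will be shown to leave $\rho(S)$ unchanged, while transfers alter $\rho(S)$ but in such a way that the parity of $\Sigma(\rho(S))/2$ and the parity of $\lambda(\rho(S))/2$ both flip, so their sum modulo~$2$ is preserved.

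For a deletion of two consecutive same-color elements $s_i, s_{i+1}$, observe that since the subsequence of that color alternates in sign, $s_i$ and $s_{i+1}$ necessarily carry opposite signs. Thus the deleted pair is itself a length-$2$ mono-chromatic block of exactly the type that the reduction $\rho$ removes. A short local-confluence argument---or equivalently the left-to-right ``push, then pop whenever the current color matches the stack top'' scan---shows that $\rho$ is well-defined and that the removal of this specific pair can be scheduled first when computing $\rho(S)$. Hence $\rho$ applied to the post-deletion sequence equals $\rho(S)$. An insertion introduces, by construction, a length-$2$ mono-chromatic opposite-sign block, which $\rho$ removes immediately, so $\rho$ is again unchanged. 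In both cases $\Sigma(\rho(S))$ and $\lambda(\rho(S))$, and therefore $\tau(S)$, are preserved.

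Now consider the transfer sending $s_k$ to the beginning as $\overline{s_k}$. Every mono-chromatic pair removed by $\rho$ has sign-sum $0$, so $\Sigma(\rho(S))=\Sigma(S)$; replacing $s_k$ by its sign-inverse shifts $\Sigma(S)$ by $\pm 2$, and so $\Sigma(\rho(S))/2$ flips parity. It then remains to show that the first color of $\rho(S)$ flips under the transfer, since this flips $\lambda(\rho(S))/2$ by $1$ and cancels the parity change in $\Sigma(\rho(S))/2$, leaving $\tau(S)$ invariant.

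The color-flip step is the main obstacle. Writing $w=s_1\cdots s_{k-1}$ and $x=s_k$, let $v$ denote the alternating-color word produced by applying $\rho$ to $w$ alone. The requirement that both color counts of the full sequence $wx$ be odd, combined with the fact that $\rho$ preserves the parity of each color count, forces $v$ to have odd length with both endpoints of the color opposite to $x$. Consequently $\rho(wx)$ equals $vx$ (no further reduction is possible) whereas $\rho(xw)$ equals $xv$, and these two alternating words begin with opposite colors. The symmetric transfer moving $s_1$ to the end follows either by applying the same analysis to the reverse sequence or by noting that it is the inverse of the case just treated. This completes the verification that $\tau$ is invariant under insertions, deletions, and transfers.
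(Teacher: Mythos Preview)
Your treatment of insertions and deletions is fine and essentially matches the paper's. For transfers, your approach---reducing the inner word $w$ first and then analyzing how appending $x$ (respectively prepending $\overline{x}$) affects the reduced form---is a clean idea, and your observation that $\Sigma(\rho(S))=\Sigma(S)$ (since every removed pair has sign-sum zero) is a nice shortcut the paper does not use; it even shows that the paper's ``reduced sequences equal'' case is vacuous.

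However, your key claim that $v=\rho(w)$ must have both endpoints of the color opposite to $x$ is false. The parity argument only tells you that $|v|$ is odd, with the $c(x)$-count in $v$ even and the other-color count odd. In an alternating word of length $2j+1$ the endpoint color has count $j+1$, and this is even precisely when $j$ is odd. Hence for $|v|\equiv 3\pmod 4$ the endpoints of $v$ carry the color $c(x)$, not the opposite color, and then $vx$ is \emph{not} reduced. A concrete counterexample: take $S=R^{+}B^{+}R^{-}R^{+}$ and transfer the last element to the front, obtaining $S'=R^{-}R^{+}B^{+}R^{-}$. Here $w=R^{+}B^{+}R^{-}$, $x=R^{+}$, and $v=\rho(w)=R^{+}B^{+}R^{-}$ has red endpoints, so $\rho(wx)=R^{+}B^{+}\neq vx$.

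The repair is a simple case split. If the endpoints of $v$ have the color opposite to $x$, your argument goes through verbatim: $\rho(wx)=vx$ starts with the non-$c(x)$ color while $\rho(\overline{x}\,w)=\overline{x}\,v$ starts with $c(x)$. If the endpoints of $v$ have color $c(x)$, then appending $x$ (resp.\ prepending $\overline{x}$) creates a reducible pair at the end (resp.\ beginning), so $\rho(wx)$ is $v$ with its last letter removed, which starts with $c(x)$, while $\rho(\overline{x}\,w)$ is $v$ with its first letter removed, which starts with the other color. In both cases the first color of the reduced sequence flips, $\lambda$ changes by $\pm 2$, and your parity computation for $\tau$ goes through. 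The paper, by contrast, argues directly on $\rho(S)$ and $\rho(S')$ via a case split on the color of the transferred element, arriving at the same two possible relationships between the reduced sequences.
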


\begin{proof}
It follows immediately from the definitions that the reduced sign sequences, and thus the trace, is invariant under insertions and deletions.
As for transfers, consider the two sequences $S=s_1,\ldots,s_k$ and $S'=s_2,\ldots,s_k,\overline{s_1}$.
If their reduced sign sequences are equal, so are their traces, so assume that their reduced sign sequences are different.
In this case we claim that we either have (a) $\rho(S)=r_1,r_2,\ldots, r_m$ and $\rho(S')=r_2,\ldots, r_m,\overline{r_1}$ or (b) $\rho(S)=r_1,r_2,\ldots, r_m$ and $\rho(S')=\overline{r_m},r_1,\ldots, r_{m-1}$.
Assume without loss of generality that $r_1$ is red, and thus $r_m$ is blue.
We will distinguish the cases where $s_1$ is red or blue.
If $s_1$ is red we can assume without loss of generality that $s_1=r_1$.
In particular, $\overline{s_1}$ will be the last element of $\rho(S')$.
Further, as all blocks of the same color in $S$ except the first and last one do not change their length with a transfer, all other elements in $\rho(S)$ stay the same and we are in case (a).
If $s_1$ is blue, the transfer will either create or destroy an odd-length blue block in the beginning of $S$.
In particular, $\rho(S')$ will begin with a blue element $r_0$.
On the other hand, the transfer will either create or destroy an even-length blue block at the end of $S$ and thus $\rho(S')$ will end with the red element $r_{m-1}$.
Since $\rho(S')$ is still a bi-chromatic sign sequence, we conclude that $r_0=\overline{r_m}$ and we are in case (b).
%In this case, we can assume that $s_1$ and $\overline{s_1}$ are in the sequences, that is, $\rho(S)=s_1,r_2,\ldots, r_m$ and $\rho(S')=r_2,\ldots, r_m,\overline{s_1}$.
In both cases we deduce that $\Sigma(\rho(S'))=\Sigma(\rho(S))\pm 2$ and $\lambda(\rho(S'))=\lambda(\rho(S))\pm 2$.
Thus $\tau(S'):=\frac{1}{2}(\Sigma(\rho(S))+\lambda(\rho(S))\pm 4)\text{ mod }2 = \tau(S)\pm 2\text{ mod }2=\tau(S)$.
\end{proof}

Note that the above in general does not hold for switch operations.
While there is no change in the sum of the signs in the sign sequence by a switch operation, this does not hold for the reduced sign sequence, as the parities of adjacent substrings of the same color changes.

\begin{lemma}\label{lem:trace_reverse}
The traces of a bi-chromatic sign sequence $S$ and its reverse $\overline{S}$ are different.
\end{lemma}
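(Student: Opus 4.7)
The key idea is that the reduction commutes with the ``reverse-and-invert'' operation: I will argue that $\rho(\overline{S}) = \overline{\rho(S)}$. Once this is established, the trace of $\overline{S}$ is determined by the trace of $S$, and a direct calculation shows that the two traces differ modulo~$2$.

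To establish $\rho(\overline{S}) = \overline{\rho(S)}$, I would note that $\overline{S} = \overline{s_k},\ldots,\overline{s_1}$ has maximal monochromatic runs that are exactly the reverses (with signs inverted) of the maximal monochromatic runs of $S$, taken in reverse order, and that the corresponding runs have identical lengths. Consequently, any even-length monochromatic subsequence that can be deleted from $S$ corresponds to an even-length monochromatic subsequence that can be deleted from $\overline{S}$, so the two reduction processes can be carried out in lockstep. Together with the confluence of the reduction procedure, this yields the claim.

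Write $\rho(S) = r_1,\ldots,r_m$, so that $\rho(\overline{S}) = \overline{r_m},\ldots,\overline{r_1}$. Since every sign in $\rho(\overline{S})$ is the inverse of a sign in $\rho(S)$, we have $\Sigma(\rho(\overline{S})) = -\Sigma(\rho(S))$. Moreover, $\rho(S)$ starts and ends with symbols of different colors (as noted immediately after the definition of $\rho$), so the first element $\overline{r_m}$ of $\rho(\overline{S})$, which has the same color as $r_m$, has color opposite to that of $r_1$. Hence $\lambda(\rho(S)) + \lambda(\rho(\overline{S})) = 2$. Summing the two traces,
\[
\tau(S) + \tau(\overline{S}) \equiv \frac{\Sigma(\rho(S)) + \Sigma(\rho(\overline{S})) + \lambda(\rho(S)) + \lambda(\rho(\overline{S}))}{2} \equiv \frac{0 + 2}{2} \equiv 1 \pmod{2},
\]
so $\tau(S) \neq \tau(\overline{S})$.

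The main obstacle is verifying the commutativity $\rho(\overline{S}) = \overline{\rho(S)}$, which rests on the confluence of the reduction procedure. This is intuitively clear because the reduction depends only on block lengths and colors, both of which are invariant under reversal-with-inversion, but a clean argument requires a local-confluence check over the few cases (non-overlapping, overlapping, or adjacent candidate deletions); termination then follows from the strict decrease in length.
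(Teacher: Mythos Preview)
Your proof is correct and follows essentially the same route as the paper: both deduce $\Sigma(\rho(\overline{S})) = -\Sigma(\rho(S))$ and that the first colors of $\rho(S)$ and $\rho(\overline{S})$ differ, then conclude $\tau(S)\neq\tau(\overline{S})$. The paper is simply terser, leaving the identity $\rho(\overline{S}) = \overline{\rho(S)}$ (and hence the confluence of~$\rho$) implicit, whereas you spell these points out and finish with the equivalent computation $\tau(S)+\tau(\overline{S})\equiv 1\pmod 2$.
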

\begin{proof}
As in the reverse of a sequence all signs change, we have $\Sigma(\rho(\overline{S}))=-\Sigma(\rho(S))$.
Further, the first color in $\rho(\overline{S})$ is the last color in $\rho(S)$ and thus different from the first color of $\rho(S)$.
Hence, we have $\lambda(\rho(\overline{S}))=\lambda(\rho(S))\pm 2$.
Plugging this into the definition of $\tau$, we conclude that $\tau(\overline{S})\neq\tau(S)$.
\end{proof}

Combining Lemma \ref{lem:trace_operations} and Lemma \ref{lem:trace_reverse}, we conclude the following:

\begin{corollary}\label{cor:switches_needed}
Any sequence of operations transforming a bi-chromatic sign sequence $S$ into its reverse $\overline{S}$ contains at least one switch.
\end{corollary}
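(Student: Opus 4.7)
My plan is to prove the corollary by a direct contradiction argument that treats the two preceding lemmas as black boxes, since one supplies the invariant and the other shows the source and target are inequivalent under it. Suppose, for contradiction, that there is a sequence of operations $S = S_0, S_1, \ldots, S_t = \overline{S}$ in which every step is an insertion, a deletion, or a transfer (but never a switch). I would then apply Lemma \ref{lem:trace_operations} to each consecutive pair $S_{i-1}, S_i$ to conclude $\tau(S_{i-1}) = \tau(S_i)$, and chain these equalities by a trivial induction on $i$ to obtain $\tau(S) = \tau(\overline{S})$. This contradicts Lemma \ref{lem:trace_reverse}, which asserts that the trace of a bi-chromatic sign sequence and the trace of its reverse are different. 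Hence at least one step of the sequence must be a switch.

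The argument is essentially a standard invariant-based impossibility proof: the trace $\tau$ is a parity-type invariant, Lemma \ref{lem:trace_operations} certifies that the three ``non-switch'' operations preserve it, and Lemma \ref{lem:trace_reverse} certifies that $S$ and $\overline{S}$ lie in different $\tau$-classes. Because the two lemmas have already been proved, there is no real obstacle to overcome; the only thing that requires care is making explicit that the invariance in Lemma \ref{lem:trace_operations} is per step and hence propagates along an arbitrarily long composition of the allowed operations. I would write the proof in two or three sentences, emphasizing the combination of the two lemmas and deriving the contradiction directly.
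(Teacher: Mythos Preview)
Your proposal is correct and matches the paper's approach exactly: the paper simply states that the corollary follows by combining Lemma~\ref{lem:trace_operations} and Lemma~\ref{lem:trace_reverse}, and your contradiction argument is precisely how that combination is carried out.
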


Combining this with Lemma \ref{lem:summary}, we obtain a proof of Theorem \ref{thm:main}.
In particular, knowing the traces of $S$ and $\overline{S}$, computing the trace for a cross-section between $S$ and $\overline{S}$ also tells us in which direction a switch has to occur.
In order to have a sidedness oracle for a plane in our original problem, we thus need to efficiently compute the trace of the reduced sign sequence of the cross-section determined by this plane.

\begin{lemma}\label{lem:obtain_sign_sequence}
\sloppypar{
The reduced sign sequence of a cross-section of $n$ lines can be obtained in $O(n^{4/3}\log^{1+\epsilon}(n))$ time.
}
\end{lemma}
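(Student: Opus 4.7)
The plan is to apply the bicolored median-level walk used for ham-sandwich counting twice, once for the pair (green,\,red) and once for (green,\,blue), and then merge the two outputs along the $x$-axis before reducing.

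More concretely, in the dual arrangement $A(P)$, the curve $g(P)$ is $x$-monotone and supported by green lines, and our goal is to list, in order along $g(P)$, every point at which it meets $r(P)$ or $b(P)$, each labelled by the side from which the other level approaches. First I would run the algorithm of~\cite{Chan, constructing_belts} on the subarrangement of green and red lines: it walks $g(P)$ left-to-right while maintaining the red line currently supporting $r(P)$, and it processes in amortized $O(n^{1/3}\log^{1+\epsilon}(n))$ time each event at which $r(P)$ changes supporting line or at which $g(P)$ and $r(P)$ cross. At every crossing event the local sign (``above'' or ``below'') is determined in $O(1)$ by comparing the slopes of the two supporting lines, so the walk yields, in $O(n^{4/3}\log^{1+\epsilon}(n))$ time, the ordered signed sequence of red intersections along $g(P)$ together with their $x$-coordinates. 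Running the same walk on the green and blue lines produces the analogous sign sequence of blue intersections.

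Merging the two signed sequences by $x$-coordinate produces the full bi-chromatic sign sequence $S$ induced by $P$; since each subsequence has length $O(n^{4/3})$, the merge costs only $O(n^{4/3}\log n)$, comfortably inside the stated bound. Finally, $\rho(S)$ is computed by one left-to-right pass that pushes signs onto a stack and pops the top two whenever they share a color (by condition~(2) of the definition of a bi-chromatic sign sequence, these two stacked signs are automatically opposites and hence form a removable mono-chromatic block), taking linear time in $|S|$.

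The main obstacle is verifying that the bicolored walk of~\cite{Chan, constructing_belts}, which was designed merely to count intersections of median levels, can be made to output oriented crossings in order. This should pose no difficulty because the algorithm is event-driven and maintains both median levels explicitly at each event; the identity and direction of every crossing are therefore available at no additional cost, so the overall running time is dominated by the two walks and equals the claimed $O(n^{4/3}\log^{1+\epsilon}(n))$.
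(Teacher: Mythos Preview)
Your approach is essentially the paper's: run the median-level walk of~\cite{Chan,constructing_belts} once on the green and red lines and once on the green and blue lines to list the signed crossings along $g(P)$, merge the two lists by $x$-coordinate, and reduce in linear time. The only slip is the per-event cost you quote---the walk takes $O(\log^{1+\epsilon}(n))$ amortized time per vertex of the level (whose complexity is $O(n^{4/3})$ by Dey's bound~\cite{Dey}), not $O(n^{1/3}\log^{1+\epsilon}(n))$---but your overall bound and argument are correct.
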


\begin{proof}
We first compute each color class of the bi-chromatic sign sequence separately.
To this end, we walk the median level of the line arrangement of the two involved color classes, keeping track of the levels of them (note that every intersection between the two median levels lies in the median level of their total arrangement).
That is, we walk the median level of the red and the green lines, and then the median level of the blue and the green lines.
This can be done in $O(m\log^{1+\epsilon}(n))$ time~\cite{Chan, constructing_belts}, where $m$ is the complexity of the median level.
The current best bound for the complexity of a median level of an arrangement of $n$ lines is $O(n^{4/3})$~\cite{Dey}.
Further, the above argument shows that the number of elements of each color class in the bi-chromatic sign sequence is bounded by the complexity of the corresponding median level of the arrangement of two colors.
Given the two subsequences, we can easily merge them to the order in which they appear from left to right into the bi-chromatic sign sequence.
The time this requires is linear in the length of that sequence, which we have just observed to be in $O(n^{4/3})$.
Finally, given a bi-chromatic sign sequence, its reduced sign sequence can easily computed in time linear in the length of the bi-chromatic sign sequence.
\end{proof}

All that remains now for the sidedness oracle is to compute the trace of the sign sequence, which can again be easily done in time linear in the length of the reduced sign sequence.

\begin{corollary}\label{cor:side}
Given a plane through the $z$-axis, we can determine on which side of the plane there exists bisecting cross-section in $O(n^{4/3} \log^{1+\epsilon}(n))$.
\end{corollary}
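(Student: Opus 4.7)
The plan is to reduce the sidedness query for a plane $P$ to one invocation of the sign-sequence algorithm and a constant number of trace comparisons. Concretely, I would first apply Lemma~\ref{lem:obtain_sign_sequence} to the cross-section induced by $P$; this delivers the reduced sign sequence $\rho(S_P)$ in $O(n^{4/3}\log^{1+\epsilon}(n))$ time. A single linear pass over $\rho(S_P)$ then yields $\Sigma(\rho(S_P))$ and $\lambda(\rho(S_P))$, hence the trace $\tau(S_P)$, in time $O(n^{4/3})$, which is absorbed by the stated bound.

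Next, I would fix an arbitrary reference plane $P_0$ through the $z$-axis (computed once, outside the oracle, or recomputed at each call if one insists on a purely local statement) and record $\tau(S_{P_0})$; by Lemma~\ref{lem:trace_reverse}, the trace at the oppositely oriented plane $P_1 = \overline{P_0}$ is the other element of $\{0,1\}$. By Lemma~\ref{lem:trace_operations}, insertions, deletions, and transfers all leave the trace unchanged, so along the rotation from $P_0$ to $P_1$ the trace can only flip at a switch. Thus $\tau(S_P) = \tau(S_{P_0})$ would imply that no trace-changing switch has yet occurred on the rotational arc from $P_0$ to $P$, forcing one to lie on the complementary arc from $P$ to $P_1$; symmetrically, $\tau(S_P) \neq \tau(S_{P_0})$ localises a switch on the arc from $P_0$ to $P$. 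By Lemma~\ref{lem:summary}, each such switch corresponds to a cross-section admitting a line that bisects all three colour classes, yielding the desired side.

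I do not expect a genuine obstacle: the algorithmic weight sits entirely inside Lemma~\ref{lem:obtain_sign_sequence}, and the combinatorial content is already packaged in the trace invariants. The only care required is bookkeeping for degeneracies, in particular the case when $P$ itself realises the trace change, i.e., $P$ is already a bisecting plane; in that case the oracle may simply report $P$ as a witness. Once that boundary case is handled, the cost of a single query is dominated by the sign-sequence computation and matches the claimed $O(n^{4/3}\log^{1+\epsilon}(n))$ bound.
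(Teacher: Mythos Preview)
Your proposal is correct and matches the paper's approach: compute the reduced sign sequence via Lemma~\ref{lem:obtain_sign_sequence}, evaluate its trace in linear time, and compare with the reference trace to decide the side (using Lemmas~\ref{lem:trace_operations}, \ref{lem:trace_reverse}, and~\ref{lem:summary}). The paper states this slightly more tersely in the paragraph preceding the corollary, but the argument is identical.
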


\section{The Algorithm}
\label{sec:algorithm}
We may now use the above insight to devise an algorithm to find the defining plane of a cross-section with a simultaneously bisecting line.
We first recall that the existence of a bisecting line depends only on the order type of the cross-section.
As all events that change the order type of a cross-section between $P_a$ and $P_b$ are the ones in which three points of the cross-section become collinear, we need to find such an event.
As noted in the introduction, within that interval, any triple of points in the cross-sections can change its orientation only a constant number of times, during which it becomes collinear.
As there can be $\Theta(n^3)$ such events, checking all of them would already give a polynomial time algorithm, but we want to do something faster.

\subsection{A general framework}
We will now develop an algorithmic framework that allows us to to find the defining plane of a cross-section with a simultaneously bisecting line efficiently.
As all events that change the order type of a cross-section between $P_a$ and $P_b$ are the ones in which three points of the cross-section become collinear, we need to find such an event.
As noted in the introduction, within that interval, any triple of points in the cross-sections can change its orientation only a constant number of times, during which it becomes collinear.
As there can be $\Theta(n^3)$ such events, checking all of them would already give a polynomial time algorithm, but we want to do something faster.
For our framework we will assume the following input:

\begin{enumerate}
\item a continuously changing family of point sets $P_t, t\in[0,1]$ of size $n$ with the property that each triple of points changes their orientation only a constant number of times and for each triple these events can be found in constant time. This family contains some solution point sets $P_x$.
\item a sidedness oracle $S$ which for every $t\in[0,1]$ computes in time $T(n)$ a value $S(t)\in\{-1,0,1\}$ with the following properties:
	\begin{enumerate}
	\item $S(0)=-1$;
	\item $S(1)=1$;
	\item if $S(t)=0$ then $P_t$ is a solution point set;
	\item for every $a<b$ such that $S(a)=-1$ and $S(b)=1$ or vice versa, the interval $[a,b]$ contains a solution point set.
	\end{enumerate}
\end{enumerate}

For example, the family of point sets could be defined by points in two colors, red and blue, moving at constant speed along trajectories that are described by polynomials of constant degree, the blue points starting left and ending right of the $y$-axis, the red points starting right and ending left of the $y$-axis.
The solution point sets could then be all point sets where both the blue and the red Tukey median regions can be intersected by a single vertical line.
A sidedness oracle could then be constructed by computing both Tukey median regions and checking if the are separated by a vertical line, returning $0$ if not, or $-1$ or $1$, depending on which color is to the left of the vertical line.
The runtime of the sidedness oracle in this example would be $O(n\log^3(n))$, see \cite{Tukey_median}.

We will use Megiddo's~\cite{megiddo} parametric search technique in the following way.
Let $P_a$ and $P_b$ be two point sets with $S(a)=-1$ and $S(b)=1$ or vice versa (we may start with $P_a = P_0$ and $P_b = P_1$).
We therefore know that there is at least one solution point set $P_x$ between them.
The high-level idea is to construct a representation of the order type of the (unknown) point set $P_x$ by simulating a parallel sorting algorithm for each point of $P_x$.
In each iteration, the sorting algorithm produces a batch of $O(n)$ comparisons that need to be answered.
Each comparison is related to a constant number of events.
There is a total order in which these events occur when going from $P_a$ to $P_b$.
By selecting the median of these events in time $O(n)$, we can answer approximately half of the events in each batch by applying the sidedness oracle once.
We repeat this $O(\log(n))$ times to answer all the events of the batch.
A batch therefore can be processed in $O((n+T(n)) \log(n))$ time.
In addition, we update $P_a$ and $P_b$ at each application of the sidedness oracle to narrow down the interval containing our solution $P_x$.
For one run of the sorting algorithm (requiring $O(\log(n))$ batches), this amounts to $O((n+T(n)) \log^2(n))$ time.
In total, we require $n$ runs of the sorting algorithm, giving the total running time of $O((n+T(n)) n \log^2(n))$.

After this high-level exposition, we now turn to the detailed description of our algorithm, where we will also describe how to further speed it up.

\paragraph{Order types and local sequences.}
The combinatorial representation of the order type we will use is the set of local sequences of unordered switches (for short, \emph{local sequences}), devised by Goodman and Pollack~\cite{semispaces}.
(See also the work by Streinu~\cite{streinu} and by Felsner and Weil~\cite{felsner_weil} on that topic.)
For any point~$p$ of a point set~$S$, consider the rotating line with pivot~$p$;
the order in which this line passes over the other points of $S$ is the local sequence of $p$ in $S$.
(While the local sequence is cyclic, it can be linearized by letting it start at the element of $S \setminus \{p\}$ with, say, the smallest label.)
The local sequences of all the points of $S$ determine the order type of~$S$~\cite{semispaces}.
We construct these sequences by considering $n$ different (but not independent) sorting problems:
for each point $p \in S$, sort the points of $S \setminus \{p\}$ around $p$ as they appear along the local sequence.

\paragraph{Determining the order type on $P_x$.}
Let $P=P_x$ be the unknown solution point set.
For each point $p \in P$, we run a parallel sorting algorithm that sorts the points of $P\setminus \{p\}$ according to the unsigned local sequence of~$p$.
Pick an arbitrary point $q \in P \setminus \{p\}$ with which the sequence should start (recall that it is actually circular).
This sorting algorithm has $O(\log(n))$ iterations.
In each iteration, we get a batch $B$ of $(n-1)/2$ point pairs of $S \setminus \{p\}$;
such a pair $(a,b)$ corresponds to a comparison of $a$ and $b$ in the order around $p$, that can be answered by knowing the order type of the four points $p,q,a,b$.
Between $P_a$ and $P_b$, there are at most a constant number of events at which this order type changes (all of which can be obtained in constant time).
Thus, for the batch $B$ of comparisons, we get a list $E$ of such events, whose size is in $O(|B|) = O(n)$.
Each event is associated with a value $t$ in the interval $[a,b]$.
As soon as we know for each event the side on which the solution $P$ lies, we can answer all the comparisons of the batch~$B$.
We pick the median of these events in $O(n)$ time, which gives us a point set $P'$ at which the event happens.
We apply the sidedness oracle to determine the side of $P'$ containing $P$ (or find a solution, if the oracle returns $0$).
Also, we update $P_a$ or $P_b$ by setting one to $P'$, maintaining the invariant that $P$ is between $P_a$ and $P_b$.
Hence, in $T(n) + O(n)$ time, we can process approximately half of the events;
after $O(\log(n))$ iterations of picking the median of the remaining events and narrowing the interval between $P_a$ and $P_b$, all events~$E$ of the batch $B$ have been answered.
That is, we are left with an interval between $P_a$ and $P_b$ that contains the solution $P$ and within which the orientation of the point triples involved in the comparisons does not change.
In total, the parallel sorting algorithm requires us to answer $O(\log(n))$ batches to sort the points around $p$.
Hence, after $O((n+T(n)) \log^2(n))$ time, the local sequence of a point $p$ in the point set $P$ is determined.
We do this for all $n$ points of $P$, further narrowing down the interval between $P_a$ and $P_b$.
Thus after a total of $O((n+T(n))n\log^2(n))=O(n^2\log^2(n)+T(n)n\log^2(n))$ steps, the local sequence of all the points and thus the order type of~$P$ is determined.
Further, in the final interval $[a,b]$, this order type does not change, meaning that a solution is found.

\paragraph{Speeding up.}
As an anonymous reviewer noted, the above algorithm can be sped up by using a slight variation of the parallel sorting algorithm: by bundling together $O(n)$ batches of all $n$ runs of the sorting algorithm, we get $O(\log(n))$ batches, each of size $O(n^2)$.
For each such batch, we pick the median of the events in $O(n^2)$ time and apply the sidedness oracle to determine the side of $P'$ containing $P$ (or find a solution, if the oracle returns $0$).
Hence, in $T(n) + O(n^2)$ time, we can process approximately half of the events, leaving us again with $O(\log(n))$ iterations to answer all events of the batch.
As we again have $O(\log(n))$ batches, the local sequence of all the points and thus the order type of~$P$ can thus be determined in time $O((n^2+T(n))\log^2(n))=O(n^2\log^2(n)+T(n)\log^2(n))$.

This algorithm proves the following:

\begin{theorem}\label{thm:framework}
Given an input to the framework described above, a solution point set $P$ can be found in time $O(n^2\log^2(n)+T(n)\log^2(n))$.
\end{theorem}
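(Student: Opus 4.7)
The plan is to carry out Megiddo's parametric search on the unknown solution point set $P_x$ by simulating a parallel sorting algorithm and resolving each comparison through the sidedness oracle. I would represent the order type of $P_x$ by the Goodman--Pollack local sequences: for every point $p \in P_x$, compute the circular order in which a line rotating about $p$ sweeps the remaining $n-1$ points. All $n$ such sequences together determine the order type of $P_x$, so certifying them pins down an interval on which a bisecting cross-section exists by hypothesis (2d).

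For a fixed $p$, I would run an oblivious sorting network of depth $O(\log n)$, producing $O(\log n)$ batches of $O(n)$ pairwise comparisons each. A single comparison between two points $u$ and $v$ around $p$ is decided by the orientation of the triple $(p,u,v)$, and by hypothesis (1) this orientation flips at only $O(1)$ parameter values in the currently maintained interval $[a,b]$, each available in constant time. The batch therefore produces a set $E$ of $O(n)$ candidate events. Megiddo's idea is then to select the median event $t^\star$ of $E$ in linear time, invoke the sidedness oracle at $t^\star$ (which either returns $0$ and yields a solution directly, or tells us on which side of $t^\star$ the hidden solution lies by properties (2c)--(2d)), discard the events on the wrong side, and shrink $[a,b]$ accordingly. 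After $O(\log n)$ such oracle calls the batch $E$ is resolved; comparisons answered in earlier batches remain valid because the interval only shrinks, so the orientation of every triple already decided is frozen.

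To reach the bound stated in the theorem, I would interleave the $n$ sorting runs by executing the $i$th level of every run as one global batch of $O(n^2)$ comparisons. This leaves only $O(\log n)$ global batches; each is resolved by $O(\log n)$ oracle calls, with $O(n^2)$ work for the weighted median selection at each call, for a total of $O((n^2+T(n))\log^2 n) = O(n^2\log^2 n + T(n)\log^2 n)$. The main subtlety is the consistent, monotonic shrinking of the interval across all batches and all simulated sorting runs, but this is immediate since each oracle invocation only refines $[a,b]$ further and the orientation-change events of hypothesis (1) are intrinsic to the family $P_t$ and do not depend on the current interval. Once the final interval is obtained, no triple involved in any of the parallel sorts changes orientation inside it, so the recovered local sequences do describe a single order type and any point in the interval is a solution.
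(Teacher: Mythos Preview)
Your proposal is correct and follows essentially the same approach as the paper: Megiddo's parametric search applied to the Goodman--Pollack local sequences, with the speedup of bundling all $n$ sorting runs into $O(\log n)$ global batches of size $O(n^2)$. One minor imprecision worth fixing: a comparison in the cyclic order around $p$ is not decided by the single orientation of $(p,u,v)$ alone but by the order type of the four points $p,q,u,v$ (where $q$ is the fixed reference point at which the local sequence starts), as the paper makes explicit; this is still $O(1)$ orientation events per comparison, so your running-time analysis is unaffected.
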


\subsection{Applying the framework}
We may now use the above framework to devise an algorithm to find the defining plane of a cross-section with a simultaneously bisecting line.
We first recall that the existence of a bisecting line depends only on the order type of the cross-section.
In this terminology, we observe that there exists a bisection only if three points of the cross-section become collinear (i.e., when the median levels of the three classes intersect in a single point), or when a line of $\mathcal{L}$ becomes parallel to the defining plane~$P$ (and thus the intersection of two median levels is inside the two-dimensional region defined by the third one).
In particular, the latter events constitute a change in the order type that is not captured by our framework.
When rotating the plane $P_0$ by an angle of $\pi$, there are $n$ events in which a line becomes collinear to the plane.
We will now describe how we handle these events.

We start by arbitrarily choosing $P_0$ (and thus $P_1$).
For each line in $\mathcal{L}$, we can determine in constant time the plane between $P_0$ and $P_1$ that is parallel to it.
For each of these $n$ planes, we would like to compute the reduced sign sequence.
If the solution is at one of these $n$ events, we are done.
Otherwise, the sign sequences give us two planes between which there is a solution, and between which there is no event in which a line becomes parallel to the rotating plane.
This interval (or a solution) can be found using binary search in $O(n \log(n) + T(n)\log(n))$ time.
We let $P_a$ and $P_b$ be these two planes.

From the observations in the introduction, it follows that the number of the other events is in $O(n^3)$;
in particular, recall that the number of times a point triple on $P$ changes its order type is constant.
Further, Corollary~\ref{cor:side} gives us a sidedness oracle for these events.
We can thus apply to above framework with $P_0=P_a$ and $P_1=P_b$.
The runtime of the framework dominates the above computations, thus the total runtime of our algorithm is $O(n^2\log^2(n)+T(n)\log^2(n))$.
Recall that from Corollary~\ref{cor:side}, we have $T(n)\in O(n^{4/3} \log^{1+\epsilon}(n))$, hence we get the following:

\begin{theorem}
Given an arrangement $\mathcal{L}$ of $n$ lines in $\R^3$ colored red, green, and blue, none of which intersect the $z$-axis.
Then, a plane through the $z$-axis that defines a cross-section for which there exists a line that simultaneously bisects all three classes of points can be found in $O(n^2\log^2(n))$ time.
\end{theorem}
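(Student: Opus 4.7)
The plan is to combine the algorithmic framework of Theorem~\ref{thm:framework} with the sidedness oracle of Corollary~\ref{cor:side}. The framework determines the order type of the unknown solution cross-section by simulating $n$ parallel sorting routines, answering each batch of comparisons by resolving a constant number of collinearity events per comparison and using the oracle on the median event; it runs in $O(n^2\log^2(n)+T(n)\log^2(n))$ time. Plugging in $T(n)=O(n^{4/3}\log^{1+\epsilon}(n))$ from Corollary~\ref{cor:side} will give the claimed $O(n^2\log^2(n))$ bound, provided the framework is actually applicable.

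The main obstacle is that the framework presupposes that all order-type changes arise from triples of points becoming collinear, but in our setting there is a second type of combinatorial change: as the plane $P$ rotates around the $z$-axis, each line $\ell\in\mathcal{L}$ becomes parallel to $P$ at exactly one rotation angle, and at that moment the corresponding point of the cross-section escapes to infinity. There are $n$ such ``parallel events'', they are not of the collinearity type handled by the framework, and moreover a bisection may be realized precisely at such an event (corresponding to two median levels meeting inside the two-dimensional transitional region of the third, as in case 4 of Section~\ref{sec:proof}).

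To handle these events I would preprocess them: fix any $P_0$ (and hence $P_1$), and for each of the $n$ lines compute in constant time the unique rotation angle at which it is parallel to $P$. These $n$ angles, sorted, partition the $\pi$-arc from $P_0$ to $P_1$ into $n+1$ open sub-arcs on none of which a line becomes parallel to $P$. By Lemma~\ref{lem:trace_reverse} the traces at $P_0$ and $P_1$ differ, so by Lemma~\ref{lem:trace_operations} and Corollary~\ref{cor:switches_needed} some sub-arc $(a,b)$ must have endpoints of opposite trace; I would locate such a sub-arc by binary search on the sorted list of parallel events, evaluating the reduced sign sequence and its trace at each probed plane using Lemma~\ref{lem:obtain_sign_sequence}. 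At every probe I also test whether the probed plane itself certifies a bisection (the oracle returning $0$), in which case we are done. This stage costs $O(n\log n+T(n)\log n)$.

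On the surviving sub-arc $(a,b)$ the only remaining combinatorial transitions are point-triple collinearities, so the framework applies with $P_0:=P_a$, $P_1:=P_b$ and the sidedness oracle of Corollary~\ref{cor:side}. Invoking Theorem~\ref{thm:framework} yields a defining plane in $O(n^2\log^2(n)+T(n)\log^2(n))$ time, which dominates the preprocessing and, with the stated $T(n)$, simplifies to $O(n^2\log^2(n))$.
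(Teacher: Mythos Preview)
Your proposal is correct and matches the paper's own argument essentially step for step: preprocess the $n$ parallel events, binary-search among them with the trace/sidedness oracle to isolate a sub-arc free of such events (or detect a solution at one of them) in $O(n\log n + T(n)\log n)$ time, then invoke Theorem~\ref{thm:framework} on that sub-arc with the oracle of Corollary~\ref{cor:side} to obtain the stated $O(n^2\log^2 n)$ bound.
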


\section{Conclusion}
We have given a combinatorial proof of Theorem \ref{thm:main}.
The underlying structures that we discovered allowed us to give an $O(n^2\log^2(n))$ time algorithm.
We did this by providing a general framework that can be used for continuously changing point sets.
We do not believe that the runtime of our algorithm is best possible, it would be interesting to see whether other techniques might give a better runtime.
Applying techniques like $\varepsilon$-nets and cuttings so far did not lead to an improved algorithm.

It is also not clear that we really need all possible events: if one could for example show, that there is always a bisector at an event where one line becomes parallel to the plane, we would only need the first part of the algorithm, which would reduce its runtime.
We were so far not able to find an example of an arrangement where no solution is at such an event.

Apart from the existence of a ham-sandwich cut there are many other problems that could be studied for cross-sections of a rotating plane, e.g.\ finding a cross section has the point with the highest (Tukey, simplicial, ...) depth.
We expect that our framework can be used to solve some of these problems.

\subparagraph{Acknowledgments.}
We thank Luis Barba for bringing the problem to our attention. We also thank the anonymous reviewers for their helpful comments, in particular the reviewer who provided a significant improvement of the runtime of our algorithm.

\bibliographystyle{plainurl} %
\bibliography{refs}

\end{document}